\documentclass[final,5p,times,number]{elsarticle}
\usepackage{amsmath,amssymb,amsfonts,amsthm,mathtools}

\newcommand{\C}{\mathbb{C}}
\newcommand{\R}{\mathbb{R}}
\newcommand{\id}{\mathrm{id}}

\newtheorem{theorem}{Theorem}[section]
\newtheorem{lemma}[theorem]{Lemma}
\newtheorem{corollary}[theorem]{Corollary}
\theoremstyle{definition}

\theoremstyle{remark}

\numberwithin{equation}{section}

\begin{document}

\begin{frontmatter}

\title{A symmetry formula  for correlation functions in the \\ superintegrable chiral Potts spin chain }

\author[first]{Haoran Zhu}
\ead{zhuh0031@e.ntu.edu.sg}

\affiliation[first]{organization={School of Physical and Mathematical Sciences, Nanyang Technological University},
            addressline={21 Nanyang Link},
            city={Singapore},
            postcode={637371},
            country={Singapore}}

\begin{abstract}
We prove an exact finite-volume symmetry formula for two-point functions in the periodic $N$-state superintegrable chiral Potts spin chain.
We show that, for every chain length $L$ and every simultaneous eigenvector of the Hamiltonian and the one-site translation operator, the correlations satisfy $\langle Z_0^r Z_R^{\dagger r}\rangle^*=\langle Z_0^r Z_{L-R}^{\dagger r}\rangle$ for $1\leqslant r\leqslant N-1$. Hence, whenever $L$ is even, the midpoint correlation $\langle Z_0^r Z_{L/2}^{\dagger r}\rangle$ is real. Then we generalise the three-state chain case to arbitrary $N$ and to every translation eigensector. This resolves a conjecture of Fabricius and McCoy.
\end{abstract}

\begin{keyword}
superintegrable chiral Potts spin chain \sep Onsager algebra \sep correlation functions \sep translation symmetry
\MSC[2020] 82B20 \sep 82B23 \sep 81R12
\end{keyword}

\end{frontmatter}

\section{Introduction}

The superintegrable chiral Potts spin chain furnishes an $N$-state generalisation of the Ising chain with the same underlying Onsager algebra. Onsager's 1944 solution of the two-dimensional Ising model introduced the algebraic structure now bearing his name \citep{Onsager1944}. In 1985, von Gehlen and Rittenberg discovered a family of $\mathbb{Z}_N$-symmetric quantum chains with infinitely many conserved charges \citep{vGehlenRittenberg1985}, and Baxter identified the superintegrable chiral Potts case as a particularly tractable representative \citep{Baxter1988}. Explicit finite-size computations revealed the characteristic Ising-like square-root form of the spectrum \citep{AlbertiniMcCoyPerkTang1989}, and this form was later shown to follow directly from Onsager's algebra \citep{Davies1990,Davies1991,DateRoan2000}. Important progress on eigenvectors, Onsager sectors, and related algebraic identities was subsequently made by Au-Yang and Perk, by Nishino and Deguchi, and by Roan \citep{AuYangPerk2008,AuYangPerk2009,AuYangPerk2010,AuYangPerk2009Loop,NishinoDeguchi2008,Roan2010Eigenvectors}. Progress on spin-operator matrix elements and on the reconstruction and form factors of local operators was later obtained by Iorgov \emph{et al.} and by Grosjean, Maillet, and Niccoli \citep{IorgovEtAl2010,GrosjeanMailletNiccoli2015}. For Onsager-type algebraic perspectives, please see, for example, \citep{terwilliger-2021-cmp,terwilliger-2021-npb,terwilliger-2022-npb,lu-ruan-wang-2023}.

For a periodic chain of length $L$, the basic observables are the two-point functions
\[
\langle Z_0^r Z_R^{\dagger r}\rangle,
\qquad
1\leqslant r\leqslant N-1,
\qquad
R\in\{0,1,\dots,L-1\}.
\]
By contrast with the Ising case, where free-fermion and determinant methods give detailed control of correlations and spontaneous magnetisation \cite{Kaufman1949,KaufmanOnsager1949,MontrollPottsWard1963,Yang1952} (see also free-fermion approach in~\cite{barouch-mccoy-1971,cunden-majumdar-oconnell-2019,greaves-jing-zhu-2026}), finite-distance correlation functions in the chiral Potts setting remain far less explicit. Fabricius and McCoy computed these ground-state correlations for the three-state superintegrable chain at lengths $L=3,4,5$ \citep{FabriciusMcCoy2010}. From their finite-size data they proposed a conjectural form for the nearest-neighbour correlation, and in their concluding discussion, they pointed to a second striking phenomenon: for even chain length the half-chain correlation
\[
\langle Z_0 Z_{L/2}^{\dagger}\rangle
\]
appears to be real \citep{FabriciusMcCoy2010}. 


The main aim of the present paper is to show that the finite-volume reality statement is governed by a symmetry principle. We prove an exact finite-volume identity for two-point functions in every translation eigensector of the periodic superintegrable chiral Potts chain. 



We write $N$ for the number of spin states and $L$ for the chain length (thus our $L$ corresponds to the symbol $\mathcal{N}$ in \citep{FabriciusMcCoy2010}). Let
\[
\omega=e^{2\pi i/N},
\]
and let $Z,X\in M_N(\C)$ be the standard Weyl operators,
\begin{align*}
Z_{ab}&=\omega^a\delta_{ab},
\qquad
X_{ab}=\delta_{a,b+1},
\\
ZX&=\omega XZ,
\qquad
Z^N=X^N=\id.
\end{align*}
On the periodic chain $(\C^N)^{\otimes L}$ we write $Z_j$ and $X_j$ for the corresponding local operators at site $j$, with all indices understood modulo $L$. The periodic superintegrable chiral Potts Hamiltonian \citep{vGehlenRittenberg1985,Baxter1988,FabriciusMcCoy2010} is
\begin{equation}\label{eq:Hamiltonian}
H=A_0+\lambda A_1,
\end{equation}
where
\begin{align}
A_0&=-\sum_{j=0}^{L-1}\sum_{r=1}^{N-1}
\frac{e^{\frac{i\pi(2r-N)}{2N}}}{\sin(\pi r/N)}\, Z_j^r Z_{j+1}^{\dagger r},
\nonumber\\
A_1&=-\sum_{j=0}^{L-1}\sum_{r=1}^{N-1}
\frac{e^{\frac{i\pi(2r-N)}{2N}}}{\sin(\pi r/N)}\, X_j^r.
\label{eq:A0A1}
\end{align}
For real $\lambda$, this is the finite-volume Hamiltonian considered in \citep{FabriciusMcCoy2010}. Let $T$ denote the unitary one-site translation operator. Since $H$ is a periodic sum, it commutes with $T$; hence each eigenspace of $H$ admits a basis of simultaneous eigenvectors of $H$ and $T$.

Our main result is the following theorem.

\begin{theorem}\label{thm:main}
Let $|\psi\rangle$ be a normalised simultaneous eigenvector of the Hamiltonian $H$ in \eqref{eq:Hamiltonian}--\eqref{eq:A0A1} and the one-site translation operator $T$. For $1\leqslant r\leqslant N-1$ and $R\in \mathbb{Z}/L\mathbb{Z}$, define
\[
\rho_r(R):=\langle \psi|Z_0^r Z_R^{\dagger r}|\psi\rangle.
\]
Then
\begin{equation}\label{eq:main-symmetry-intro}
\rho_r(R)^*=\rho_r(-R)
\end{equation}
for every $R\in \mathbb{Z}/L\mathbb{Z}$. In particular, if $L$ is even, then
\begin{equation}\label{eq:halfway-real-intro}
\rho_r(L/2)\in\R.
\end{equation}
\end{theorem}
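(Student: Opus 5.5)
The identity \eqref{eq:main-symmetry-intro} should follow from Hermitian conjugation combined with translation covariance alone. The Hamiltonian enters only because it guarantees that simultaneous eigenvectors of $H$ and $T$ exist. In particular, I do not plan to build any antiunitary symmetry of $H$. That route (for example complex conjugation in the $Z$-basis composed with parity and charge conjugation) does preserve $H$ for real $\lambda$. However, it only shows that $\rho_r(R)^*$ equals the correlation in a possibly different eigenvector $\Theta|\psi\rangle$, and degeneracies would then be a genuine obstacle.

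\emph{Step 1 (conjugation).} Since $\rho_r(R)$ is an expectation value, $\rho_r(R)^*=\langle\psi|(Z_0^rZ_R^{\dagger r})^\dagger|\psi\rangle$. Here $(Z_0^rZ_R^{\dagger r})^\dagger=Z_R^{r}Z_0^{\dagger r}$. The operators $Z_j$ at all sites commute, since they are diagonal in the product basis; for $R=0$ the operator is the identity and everything is trivial. So this operator can be rewritten as $Z_R^{r}Z_{0}^{\dagger r}$, with the unconjugated power at site $R$ and the conjugated one at site $0$.

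\emph{Step 2 (translation).} Fix the convention $TZ_jT^{-1}=Z_{j+1}$ with indices modulo $L$; the other convention only replaces $R$ by $-R$ in the intermediate step. Then
\[
T^{-R}\,Z_R^{r}Z_0^{\dagger r}\,T^{R}=Z_0^{r}Z_{-R}^{\dagger r}.
\]
Since $T|\psi\rangle=t|\psi\rangle$ with $|t|=1$ ($T$ is unitary), we have $\langle\psi|T^{-R}=\bar t^{\,-R}\langle\psi|$, because $T^{-R}=(T^{R})^\dagger$. Hence
\[
\langle\psi|A|\psi\rangle=\langle\psi|T^{-R}AT^{R}|\psi\rangle
\]
for every operator $A$. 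Applying this with $A=Z_R^{r}Z_0^{\dagger r}$ gives $\rho_r(R)^*=\rho_r(-R)$.

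\emph{Step 3 (midpoint).} If $L$ is even, then $-L/2\equiv L/2 \pmod L$, so $\rho_r(L/2)^*=\rho_r(L/2)$, which is real. The only point requiring care is keeping the sign convention for $T$ consistent. There is no real obstacle: the statement holds for any normalised translation eigenvector, independent of $H$.
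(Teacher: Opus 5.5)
Your proposal is correct and follows the paper's proof essentially step for step. Hermitian conjugation gives $\rho_r(R)^*=\langle\psi|Z_R^rZ_0^{\dagger r}|\psi\rangle$, invariance of expectation values in a $T$-eigenvector under $A\mapsto T^{-R}AT^R$ converts this to $\rho_r(-R)$, and $-L/2\equiv L/2 \pmod L$ gives reality. There is one small slip in Step 2: $\langle\psi|T^{-R}$ is the bra of $(T^{-R})^\dagger|\psi\rangle=T^R|\psi\rangle=t^R|\psi\rangle$, so $\langle\psi|T^{-R}=\bar t^{\,R}\langle\psi|$, not $\bar t^{\,-R}\langle\psi|$. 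With the correct exponent the phases cancel as $\bar t^{\,R}t^R=|t|^{2R}=1$; with yours the factor would be $t^{2R}$.
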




\section{Symmetry formula and reality conjecture} \label{sec:proof} 

We begin with the one-site translation operator. Let $T$ be the unitary operator on $(\C^N)^{\otimes L}$ characterised by
\begin{equation}\label{eq:translation-covariance}
T Z_j T^{-1}=Z_{j+1},
\qquad
T X_j T^{-1}=X_{j+1},
\qquad
T^L=\id,
\end{equation}
with all site labels understood modulo $L$. Since the Hamiltonian \eqref{eq:Hamiltonian}--\eqref{eq:A0A1} is a periodic sum over the sites, we have
\begin{equation}\label{eq:HTcommute}
THT^{-1}=H,
\qquad\text{equivalently}\qquad
[H,T]=0.
\end{equation}
Hence, each eigenspace of $H$ admits an orthonormal basis of simultaneous eigenvectors of $H$ and $T$.

The following elementary lemma is needed for the proof of the main theorem.

\begin{lemma}\label{lem:translation-invariance}
Let $|\psi\rangle$ be a normalised eigenvector of $T$.
Then, for every operator $\mathcal O$ on $(\C^N)^{\otimes L}$ and every integer $m$,
\begin{equation}\label{eq:translation-expectation}
\langle\psi|T^{-m}\mathcal O T^m|\psi\rangle
=
\langle\psi|\mathcal O|\psi\rangle.
\end{equation}
\end{lemma}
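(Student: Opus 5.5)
The argument only needs unitarity of $T$ together with the fact that its eigenvalue has modulus one. Write $T|\psi\rangle=\tau|\psi\rangle$ for some $\tau\in\C$.

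\textbf{Step 1: the eigenvalue has modulus one.} Since $T$ is unitary and $|\psi\rangle$ is normalised,
\[
|\tau|^2=\langle\psi|T^\dagger T|\psi\rangle=\langle\psi|\psi\rangle=1 .
\]
In particular $\tau\neq0$, so $T^{-1}|\psi\rangle=\tau^{-1}|\psi\rangle$. By induction, $T^m|\psi\rangle=\tau^m|\psi\rangle$ for every integer $m$, positive, negative or zero. Alternatively, $T^L=\id$ from \eqref{eq:translation-covariance} forces $\tau^L=1$, which gives the same conclusion.

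\textbf{Step 2: the bra side.} Taking adjoints and using $T^{-m}=(T^m)^\dagger$ (unitarity) gives
\[
\langle\psi|T^{-m}=\bigl(T^m|\psi\rangle\bigr)^\dagger=\overline{\tau}^{\,m}\langle\psi| .
\]

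\textbf{Step 3: combine.} Inserting both facts,
\[
\langle\psi|T^{-m}\mathcal O T^m|\psi\rangle
=\overline{\tau}^{\,m}\tau^m\,\langle\psi|\mathcal O|\psi\rangle
=|\tau|^{2m}\langle\psi|\mathcal O|\psi\rangle
=\langle\psi|\mathcal O|\psi\rangle ,
\]
which is \eqref{eq:translation-expectation}.

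The only step needing any care is handling negative $m$ and the conjugation on the bra side. Both are covered by unitarity together with $|\tau|=1$. No property of $\mathcal O$ is used.
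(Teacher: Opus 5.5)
Your proof is correct and follows the paper's argument exactly: apply $T^m$ to the ket, take adjoints to get $\langle\psi|T^{-m}=\overline{\tau}^{\,m}\langle\psi|$, and use $\overline{\tau}^{\,m}\tau^m=1$. The only difference is that you state explicitly the facts the paper leaves implicit: unitarity of $T$, $|\tau|=1$, and the handling of negative $m$.
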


\begin{proof}
Since $T|\psi\rangle=\tau|\psi\rangle$, we have $T^m|\psi\rangle=\tau^m|\psi\rangle$ and, by taking adjoints,
\[
\langle\psi|T^{-m}=\overline{\tau}^{\,m}\langle\psi|.
\]
Therefore
\[
\langle\psi|T^{-m}\mathcal O T^m|\psi\rangle
=
\overline{\tau}^{\,m}\tau^m\langle\psi|\mathcal O|\psi\rangle
=
\langle\psi|\mathcal O|\psi\rangle,
\]
as required.
\end{proof}

We now prove the symmetry relation \eqref{eq:main-symmetry-intro}. The key point is that complex conjugation reverses the order of the local operators, while translation transports the resulting separation to its opposite on the ring.

\begin{proof}[Proof of Theorem~\ref{thm:main}]
Fix $r\in\{1,\dots,N-1\}$ and $R\in\mathbb Z$. By definition,
\[
\rho_r(R)=\langle \psi|Z_0^r Z_R^{\dagger r}|\psi\rangle,
\]
where the site label $R$ is read modulo $L$. Taking complex conjugates and using $(AB)^\dagger=B^\dagger A^\dagger$, we obtain
\begin{equation}\label{eq:conj-step}
\rho_r(R)^*
=
\langle \psi|Z_R^r Z_0^{\dagger r}|\psi\rangle.
\end{equation}
Apply Lemma~\ref{lem:translation-invariance} with $\mathcal O=Z_R^r Z_0^{\dagger r}$ and $m=R$. Then
\begin{equation}\label{eq:translate-step}
\rho_r(R)^*
=
\langle\psi|T^{-R}(Z_R^r Z_0^{\dagger r})T^R|\psi\rangle.
\end{equation}
Using \eqref{eq:translation-covariance}, we compute
\[
T^{-R}Z_R^rT^R=Z_0^r,
\qquad
T^{-R}Z_0^{\dagger r}T^R=Z_{-R}^{\dagger r},
\]
again with site labels understood modulo $L$. Substituting this into \eqref{eq:translate-step} gives
\[
\rho_r(R)^*
=
\langle\psi|Z_0^r Z_{-R}^{\dagger r}|\psi\rangle
=
\rho_r(-R).
\]
This proves \eqref{eq:main-symmetry-intro}.

If $L$ is even and $R=L/2$, then $-R\equiv R \pmod L$. Hence
\[
\rho_r(L/2)^*=\rho_r(L/2),
\]
so $\rho_r(L/2)$ is real. This establishes \eqref{eq:halfway-real-intro}.
\end{proof}

It is convenient to record the original three-state statement in the form conjectured by Fabricius and McCoy.

\begin{corollary}[Fabricius--McCoy conjecture]\label{cor:FM}
Assume $N=3$ and let $L$ be even. Let $|\psi_0\rangle$ be a normalised finite-volume ground state chosen to be an eigenvector of the one-site translation operator. Then
\[
\langle\psi_0|Z_0 Z_{L/2}^{\dagger}|\psi_0\rangle\in\R.
\]
\end{corollary}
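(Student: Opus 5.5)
The plan is to obtain the corollary as the special case $N=3$, $r=1$, $R=L/2$ of Theorem~\ref{thm:main}. The only point needing care is that the state $|\psi_0\rangle$ satisfies the theorem's hypotheses. For $N=3$ and real $\lambda$, the Hamiltonian \eqref{eq:Hamiltonian}--\eqref{eq:A0A1} is exactly the three-state superintegrable chain of Fabricius and McCoy, with our $L$ playing the role of their $\mathcal N$.

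First, I will check the hypotheses. A finite-volume ground state is by definition an eigenvector of $H$ for its lowest eigenvalue; this is meaningful because $H$ is Hermitian for real $\lambda$. The coefficient of $Z_j^rZ_{j+1}^{\dagger r}$ and the coefficient of $Z_j^{N-r}Z_{j+1}^{\dagger (N-r)}$ are complex conjugates, so these terms pair into Hermitian combinations. By assumption $|\psi_0\rangle$ is also a normalised eigenvector of $T$, so it is a normalised simultaneous eigenvector of $H$ and $T$. If the ground eigenspace is degenerate, \eqref{eq:HTcommute} still guarantees that such a choice exists, and the corollary concerns precisely such a choice.

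Second, I will apply \eqref{eq:halfway-real-intro} with $r=1$. Since $L$ is even, $\rho_1(L/2)=\langle\psi_0|Z_0Z_{L/2}^{\dagger}|\psi_0\rangle$ is real. Note that $r=1\leqslant N-1=2$, so this value of $r$ is within the range allowed by the theorem.

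I expect no real obstacle. The only subtlety is the degenerate case. For an arbitrary ground state that is not a translation eigenvector, Lemma~\ref{lem:translation-invariance} fails, and the conclusion need not hold. This is why the corollary explicitly requires $|\psi_0\rangle$ to be an eigenvector of $T$, and I will remark on this in the proof.
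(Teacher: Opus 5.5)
Your proposal is correct and matches the paper's proof: both obtain the corollary as the case $N=3$, $r=1$ of Theorem~\ref{thm:main}. Both use $[H,T]=0$ only to guarantee that a ground state which is also a $T$-eigenvector exists, including in the degenerate case. Your extra check that $H$ is Hermitian for real $\lambda$, so that ``ground state'' is well defined, is a harmless supplement that the paper leaves implicit.
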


\begin{proof}
Since $[H,T]=0$, the finite-volume ground-state eigenspace contains an orthonormal basis of simultaneous eigenvectors of $H$ and $T$. The claim is the case $N=3$ and $r=1$ of Theorem~\ref{thm:main}.
\end{proof}

\bigskip
\noindent{\bf Acknowledgements
} We would like to thank the support of the NTU Research Scholarship (RSS). The author also benefits from the discussion with K. Wang.

\end{document}